\newtheorem{theorem}{Theorem}
\def\BibTeX{{\rm B\kern-.05em{\sc i\kern-.025em b}\kern-.08em
    T\kern-.1667em\lower.7ex\hbox{E}\kern-.125emX}}
\begin{document}

\title{MDM: Manhattan Distance Mapping of DNN Weights for Parasitic-Resistance‑Resilient Memristive Crossbars}

\author{\IEEEauthorblockN{Matheus Farias}
\IEEEauthorblockA{\textit{Harvard University} \\
matheusfarias@g.harvard.edu}
\and
\IEEEauthorblockN{Wanghley Martins}
\IEEEauthorblockA{
\textit{Duke University}\\
wanghley.soares.martins@duke.edu}
\and
\IEEEauthorblockN{H. T. Kung}
\IEEEauthorblockA{
\textit{Harvard University}\\
kung@harvard.edu}
}

\maketitle

\begin{abstract}
\textit{\underline{M}anhattan \underline{D}istance \underline{M}apping} (MDM) is a post-training deep neural network (DNN) weight mapping technique for memristive bit-sliced compute-in-memory (CIM) crossbars that reduces parasitic resistance (PR) nonidealities.

PR limits crossbar efficiency by mapping DNN matrices into small crossbar tiles, reducing CIM-based speedup. Each crossbar executes one tile, requiring digital synchronization before the next layer. At this granularity, designers either deploy many small crossbars in parallel or reuse a few sequentially—both increasing analog-to-digital conversions, latency, I/O pressure, and chip area.

MDM alleviates PR effects by optimizing active-memristor placement. Exploiting bit-level structured sparsity, it feeds activations from the denser low-order side and reorders rows according to the Manhattan distance, relocating active cells toward regions less affected by PR and thus lowering the nonideality factor (NF).

Applied to DNN models on ImageNet-1k, MDM reduces NF by up to 46\% and improves accuracy under analog distortion by an average of 3.6\% in ResNets. Overall, it provides a lightweight, spatially informed method for scaling CIM DNN accelerators.
\end{abstract}

\begin{IEEEkeywords}
memristive crossbars, compute-in-memory, crossbar nonidealities, parasitic resistance.
\end{IEEEkeywords}

\section{Introduction}
Compute-in-memory (CIM) architectures integrate storage and computation within the same physical fabric, offering energy-efficient deep neural network (DNN) acceleration by reducing data movements \cite{jhang2021, yu2021, ali2022, kaur2024}. However, their scalability remains limited by nonidealities—such as sneak paths \cite{lee2025, rao2022, cassuto2013, cassuto2016}, process-voltage-temperature variations \cite{chen2025, jo2010}, stuck-at faults \cite{you2025, oli2022, zhang2019, yeo2019}, conductance drift \cite{munoz2021, ambrogio2019}, and parasitic resistance (PR) \cite{xu2022, zhang2021, zhang2020}—which degrade inference accuracy and restrict computational parallelism in large-scale workloads \cite{rasch2023, ciprut2017}.

PR is a key scalability bottleneck in CIM accelerators, caused by resistive interconnects within crossbars.
We hypothesize that the resulting voltage drops grow proportionally with the Manhattan distance from the I/O rails—an effect we term \textit{the Manhattan Hypothesis}. This model enables analytical estimation of PR impact without requiring circuit-level simulations.

DNN weights are typically mapped across crossbar rows, with each column representing a fractional bit \cite{farias2025a, farias2025b, shafiee2016, chou2019}.
Because weights follow a bell-shaped distribution centered near zero \cite{farias2025a, farias2025b, fang2020, horton2022, tambe2020}, high-order columns that encode large magnitudes are sparse, while lower-order columns are more frequently active.
This structured imbalance drives current through deeper paths, leading to amplified PR effects.

This spatial nonideality constrains crossbar size. Large arrays amplify PR deviations, forcing DNN partitioning into smaller crossbar tiles to preserve accuracy. However, smaller tiles demand additional digital synchronization and I/O bandwidth between computing phases, mitigating CIM intra-parallelism. Consequently, PR simultaneously degrades model accuracy and undermines system-level throughput, posing a fundamental obstacle to scaling CIM-based DNN accelerators.

To address this limitation, we propose the \textit{Manhattan Distance Mapping} (MDM) algorithm, a post-training spatial remapping strategy that reduces PR distortion without altering crossbar computation.
MDM operates in three stages.
First, it reverses the dataflow so that denser, lower-order bit regions—where active memristors are concentrated—align with shorter conduction paths, reducing cumulative voltage drops.
Second, it assigns each row a Manhattan-based score that quantifies the distance of its active cells from the I/O rails, reflecting their relative exposure to parasitic effects.
Finally, rows are reordered in ascending order of this score, relocating dense regions toward areas less affected by resistance buildup.
This spatial reorganization reduces the nonideality factor (NF)—the deviation of the measured output from its ideal value—while preserving all arithmetic semantics, requiring neither retraining nor hardware modification, and integrating seamlessly into existing deployments \cite{farias2025a, farias2025b}. See Figure \ref{fig:summary} for a summary of the approach.

The main contributions of this paper are:
\begin{itemize}
\item A theoretical foundation for MDM, built upon (1) the \textit{Manhattan Hypothesis}, which shows that voltage drops accumulate proportionally to the Manhattan distance from the I/O rails, and (2) a mathematical proof of structured bit-level sparsity in DNN weight distributions;
\item A post-training weight mapping that reverses dataflow and reorders rows to place active cells in regions less affected by PR accumulation, requiring no retraining nor hardware modification;
\item A framework that models PR by injecting spatially dependent noise into DNN weights, enabling analog distortion assessment on PyTorch models.
\end{itemize}

\begin{figure*}
    \centering
    \includegraphics[width=\textwidth]{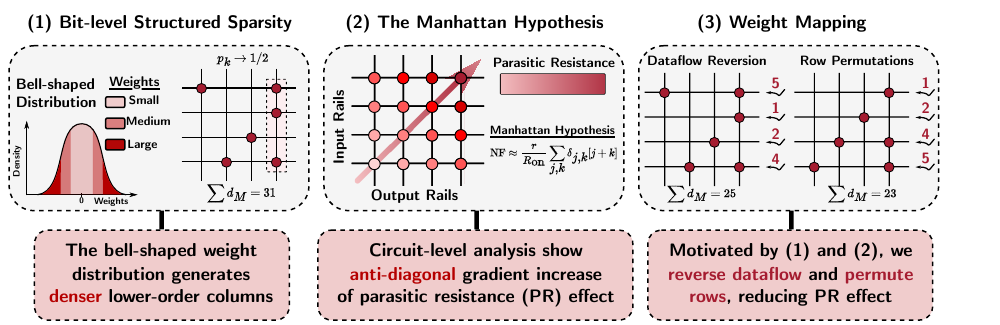}
    \caption{Summary of the Manhattan Distance Mapping (MDM).}
    \label{fig:summary}
\end{figure*}

\section{Background}
\label{sec:back}
\subsection{Memristive Crossbar Assumptions}
\label{sec:ass}

We adopt bit-sliced crossbars \cite{farias2025a, farias2025b}, where each row encodes a weight and columns represent power-of-two scaling factors. Higher-order columns near the inputs correspond to larger factors (e.g. $2^0, 2^{-1}, 2^{-2}$), while lower-order columns farther away encode smaller ones (e.g. $2^{-5}, 2^{-6}, 2^{-7}$). For a 128×128 crossbar with 16 multipliers, each row stores eight different weight values (since $128/16=8$).

This hierarchy produces structured sparsity according to the bell-shaped distribution of DNN weights (see Section \ref{sec:sparse}).

\subsection{Nonideality Measurement}
\label{sec:nf}
Crossbar nonidealities are quantified by the NF \cite{bhatta2022, chakra2020, chakraborty2020},
\begin{equation}
    \label{eq:definition_noni}
    \text{NF}=\left|{\frac{\Delta i}{i_0}}\right |,
\end{equation}
where $i_0$ is the expected output and $\Delta i$ is the amount of current that was deviated due to nonidealities.

Each cell can be identified by its position ($j,k$) corresponding to its row and column indices seeing from the I/O interface. The Manhattan distance, $d_{\text{M}}(j,k)$, of a cell is defined as the sum of its horizontal and vertical distances from the I/O rails, 
\begin{equation}
    d_{\text{M}}(j,k) = j+k.   
\end{equation}
As current propagates along the resistive mesh, voltage drops accumulate with increasing distance from the I/O rails, causing farther cells to contribute less accurately to the overall output. Section \ref{sec:manhattan} uses Kirchhoff’s law to hypothesize that the NF grows proportionally to the Manhattan distance of active cells.

\section{Theoretical Framework}
\label{sec:theory}
This section proves that (1) bit-sliced crossbars exhibits a structured bit-level sparsity pattern, and propose that (2) the NF scales with the Manhattan distance from the I/O rails—two properties that underpin the MDM method.

\subsection{Bit-level Structured Sparsity}
\label{sec:sparse}
To characterize the bit-level distribution in a $(J,K)$ crossbar, we apply Theorem \ref{th:sparsity}, derived from the DNN bell-shaped weight distribution \cite{farias2025a, farias2025b, han2015, fang2020, horton2022, tambe2020}.  
Each weight $w_j$ is mapped across $K$ fractional-bit columns as $w_j = \sum_{k \leq K} b_{j,k}(w_j)2^{-k},$
where lower-order bits exhibit higher activation probability $(b_k = 1)$\footnote{We supress the row index when the statement is row-independent.}, yielding denser columns.

\begin{theorem}
\label{th:sparsity}
    Let $W$ be a nonnegative random variable with probability density function $f : [0, \infty[ \to [0, \infty[$ such that:
    \begin{enumerate}
        \item $f$ is continuous on $[0, \infty[$ and strictly decreasing on $]0, \infty[$;
        \item $f(0) < \infty$ and $\lim_{w \to \infty} f(w) = 0$.
    \end{enumerate}
Let
\[
p_k := \mathbb{P}(b_k = 1) = \int_0^{\infty} f(w) b_k(w) dw,
\]
where $\mathbb{P}(b_k = 1)$ is the probability of $b_k = 1$.
Then
\[
\boxed{
\bigl|p_k - \tfrac{1}{2}\bigr|
    \le \frac{1}{2^{2+k}} f(0).
}
\]
In particular, $p_k < 1/2$ for every $k$ and $p_k \to 1/2$ as $k \to \infty$.
\end{theorem}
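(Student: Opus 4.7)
My plan is to exploit the dyadic structure of the bit function $b_k$ together with monotonicity of $f$.

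First I would slice $[0,\infty)$ into half-open intervals $I_m = [mh, (m+1)h)$ with $h = 2^{-k}$, and set $A_m := \int_{I_m} f$. Since $b_k \equiv 0$ on the even intervals $I_{2n}$ and $b_k \equiv 1$ on the odd ones $I_{2n+1}$, while $\sum_m A_m = 1$, one obtains the identity
\[ 1 - 2 p_k \;=\; \sum_{n \ge 0}\bigl(A_{2n} - A_{2n+1}\bigr). \]

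Second, writing each bracket as $A_{2n} - A_{2n+1} = \int_0^h \bigl[f(2nh+u) - f((2n+1)h+u)\bigr]\,du$, strict monotonicity of $f$ makes every integrand pointwise positive, so the sum is strictly positive and $p_k < 1/2$. For the magnitude, I would apply the endpoint estimates $A_{2n} \le h f(2nh)$ and $A_{2n+1} \ge h f((2n+2)h)$ (both immediate from $f$ being decreasing), obtaining $A_{2n} - A_{2n+1} \le h\bigl[f(2nh) - f((2n+2)h)\bigr]$; summing over $n$ telescopes, using $\lim_{w\to\infty} f(w) = 0$, to $h f(0)$. The limit $p_k \to 1/2$ as $k\to\infty$ then follows immediately from $h \to 0$.

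The principal obstacle is sharpening the constant from $1/2^{k+1}$ (which the telescoping above produces) to the claimed $1/2^{k+2}$. The missing factor of two is lost when $f$ is replaced on each subinterval by a single endpoint value rather than by its average; recovering it appears to require a trapezoid-style estimate along the lines of $A_{2n} - A_{2n+1} \le \tfrac{h}{2}\bigl[f(2nh) - f((2n+2)h)\bigr]$, which does not follow from monotonicity alone and presumably exploits additional curvature or symmetry of $f$ within each dyadic pair (e.g.\ a first-order Taylor expansion of $f$ across the pair, with the bell-shape controlling the remainder). I expect this final sharpening to be the most delicate step of the argument.
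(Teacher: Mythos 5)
Your decomposition, monotonicity comparison, and telescoping are exactly the paper's argument in substance: the paper writes each paired difference via the Fundamental Theorem of Calculus as an integral of $-f'$ against the tent kernel $A_L(s)=\min\{s,L-s\}$ and then bounds that kernel by $L/2$, which is precisely your endpoint estimate in disguise (and your version is arguably cleaner, since the paper's use of $f'$ quietly assumes differentiability that is not among the stated hypotheses). The one issue is the obstacle you flag at the end, which is not a real gap: the missing factor of $2$ comes from your choice of cell length, not from a lost trapezoid correction. The paper's proof defines the indicator $b_k$ to have period $L=2^{-k}$, equal to $0$ on $[mL, mL+\tfrac{L}{2}[$ and $1$ on $[mL+\tfrac{L}{2},(m+1)L[$, so the constant pieces have length $2^{-k-1}$, half of your $h=2^{-k}$. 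Running your own argument with cells of length $h=2^{-k-1}$ gives $1-2p_k \le h\,f(0) = 2^{-(k+1)}f(0)$ and hence $|p_k-\tfrac12|\le 2^{-(k+2)}f(0)$, exactly the claimed constant, with no curvature or symmetry input needed. (Under your convention for $b_k$, which is the one consistent with the expansion $w=\sum_k b_k(w)2^{-k}$ used elsewhere in the paper, the correct constant would indeed be $2^{-(k+1)}f(0)$; the discrepancy is an internal off-by-one in the paper's definitions, not a defect in your method.)
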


\begin{proof}
For $k \leq K$, set $L := 2^{-k}$ and define the $k$-th fractional-bit indicator
\[b_k(w) =
\begin{cases}
0, & w \in [mL,\, mL + \tfrac{L}{2}[, \\
1, & w \in [mL + \tfrac{L}{2},\, (m+1)L[,
\end{cases}
\quad m = 0, 1, 2, \dots   
\]
    Let
    \begin{equation}
        \Delta_k := \mathbb{P}(b_k = 0) - \mathbb{P}(b_k = 1).
    \end{equation}
    Then,
    \begin{equation}
        \Delta_k = \sum_{m=0}^{\infty} \int_{0}^{L/2} 
      \big[ f(mL + u) - f(mL + u + L/2) \big]\, du.
    \end{equation}
    By the Fundamental Theorem of Calculus,
\begin{equation}
    f(mL + u) - f(mL + L/2 + u)
   = - \int_{0}^{L/2} f'(mL + u + \theta)\, d\theta.
\end{equation}   
   Changing variables to  $s = u + \theta \in [0, L] $ gives
\begin{equation}
\Delta_k
   = \sum_{m=0}^{\infty} \int_{0}^{L}
       \big[ -f'(mL + s) \big]\, A_L(s)\, ds, 
\end{equation}
with \(A_L(s) := \min\{s, L - s\}\) on \([0, L]\).
Since $A_L(s) \le L/2$,
\begin{equation}
\Delta_k
   \le \frac{L}{2} \sum_{m=0}^{\infty}
       \int_{0}^{L} \big[ -f'(mL + s) \big]\, ds    
\end{equation}
using the Fundamental Theorem of Calculus again
\begin{equation}
  \Delta_k \le \frac{L}{2} \sum_{m=0}^{\infty}
       \big[ f(mL) - f((m+1)L) \big].  
\end{equation}
The series telescopes and $f((m+1)L) \to 0$ (statement 2 of the theorem), hence
\begin{equation}
\Delta_k \le \frac{L}{2} f(0).
\end{equation}
Noting that $p_k = \tfrac{1}{2}(1 - \Delta_k)$ and $L=2^{-k}$, we obtain
\begin{equation}
\big|p_k - \tfrac{1}{2}\big|
   \le \frac{1}{2^{2+k}} f(0).    
\end{equation}
Moreover, $\Delta_k > 0$ because $f$ is strictly decreasing on sets of positive measure (statement 1 of the theorem),
hence $p_k < \tfrac{1}{2}$.
Finally, since $L = 2^{-k} \to 0$, the bound forces $p_k \to \tfrac{1}{2}$.
\qedhere
\end{proof}
Theorem \ref{th:sparsity} motivates dataflow reversion by injecting inputs from the denser side to minimize PR along conductive paths.
\subsection{The Manhattan Hypothesis}
\label{sec:manhattan}
We consider a $(J,K)$ crossbar whose interconnects have parasitic resistance $r$, and where each active cell at $(j,k)$ exhibits resistance $R_\mathrm{on}$. The array is driven from $V_\mathrm{in}$ along the rows and sensed at the grounded column outputs.

A single active memristor $\ell$ cells farther from the input rail satisfies the Kirchhoff’s law:
\begin{equation}
\frac{V - V_{\mathrm{in}}}{\ell r} + \frac{V}{R_{\mathrm{on}}} = 0.
\label{eq:kcl_wire}
\end{equation}
The memristor current, including PR effects, is
\begin{equation}
i = i_0 + \Delta i = \frac{V}{R_{\mathrm{on}}},
\label{eq:i_actual}
\end{equation}
where $i_0=V_0/R_{\mathrm{on}}$ is the ideal current (for $r=0$).

Solving Equation \eqref{eq:kcl_wire} for $V$ under the practical assumption $\ell r \ll R_\mathrm{on}$ gives the first-order approximation
\begin{equation}
V\approx V_0 \left[ 1 - \frac{\ell r}{R_{\mathrm{on}}} \right].
\label{eq:linearized_V}
\end{equation}
Substituting Equation \eqref{eq:linearized_V} into \eqref{eq:i_actual} and normalizing by $i_0$:
\begin{equation}
\mathrm{NF}
= \left| \frac{\Delta i}{i_0} \right|
= \frac{|V - V_0|}{V_0}
\approx \ell \frac{r}{R_{\mathrm{on}}}.
\label{eq:nf_single}
\end{equation}
Equation~\eqref{eq:nf_single} shows that the deviation increases linearly with the distance between the device and its I/O rail. For an active memristor located $j$ segments from the input rail and $k$ segments from the output rail, the combined contribution is
\begin{equation}
\boxed{
\mathrm{NF} \approx \frac{r}{R_{\mathrm{on}}}[j + k]
}.
\label{eq:nf_total}
\end{equation}
Extending for multiple active memristors and sensing the current at each column end, we obtain the Manhattan Hypothesis
\begin{equation}
    \boxed{
    \mathrm{NF} \approx \frac{r}{R_\mathrm{on}}\sum_{j,k}\delta_{j,k}[j+k]
    }
    \quad \text{(Manhattan Hypothesis)}
    \label{eq:nf_column}
\end{equation}
where $\delta_{j,k} = 1$ if the crosspoint $(j,k)$ is active and $0$ otherwise.

This result shows that the NF scales proportionally with the aggregate Manhattan distance of active cells, following a gradient of increase from the bottom-left to the top-right (anti-diagonal) of the array. Consequently, crossbars exhibit identical NF values under anti-diagonal symmetric configurations—a behavior corroborated by SPICE circuit-level simulations (see Figure~\ref{fig:symmetry}).
This linear relationship isolates PR as the sole source of nonideality. Other effects, such as sneak-path currents, are not captured by this first-order model.
To decouple these phenomena, we consider the sparse regime of bit-sliced crossbars for DNN workloads. In such configurations, sneak paths are more likely to be suppressed \cite{cassuto2013, cassuto2016}.
\begin{figure}
    \centering
\includegraphics[width=\columnwidth]{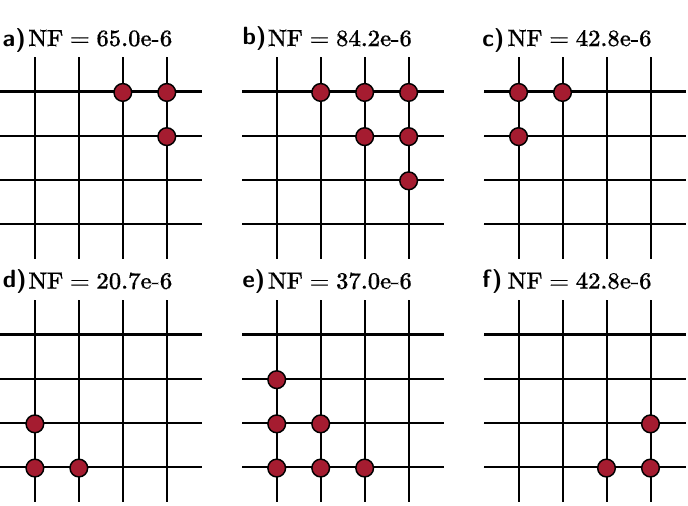}
    \caption{Circuit-level simulations in SPICE shows anti-diagonal symmetry for $r = 2.5$ $\Omega$, $R_{\text{on}} = 300$ k$\Omega$, and $R_{\text{off}} = 3$ M$\Omega$ (values within range suggested in the literature \cite{chakraborty2020, cao2025, cao2021}.)}
    \label{fig:symmetry}
\end{figure}

\section{Manhattan Distance Weight Mapping}

The MDM algorithm reduces crossbar PR effect by reorganizing weights to minimize the Manhattan distance of active memristors from the I/O rails in three steps.

First, the dataflow is reversed so that denser, lower-order bits align with shorter conduction paths, thereby reducing the PR impact.  
Second, a Manhattan-based score is computed for each row, quantifying the distance of its active memristors from the input.  
Finally, rows are sorted according to this score, positioning denser rows closer to I/O (see Figure \ref{fig:mdwm}).  
\begin{figure}
    \centering
    \includegraphics[width=\columnwidth]{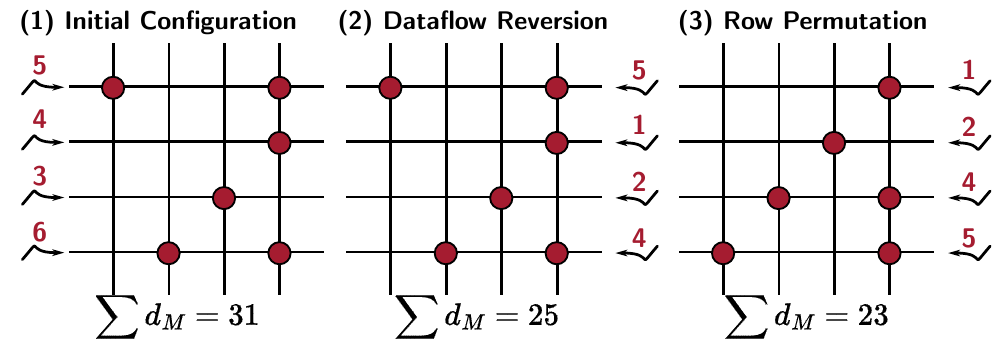}
    \caption{MDM example. Arrows on the left/right indicate dataflow and numbers on top of each arrow indicate row score.}
    \label{fig:mdwm}
\end{figure}

This spatial remapping minimizes NF without modifying the DNN model. It operates as a post-training transformation that can be seamlessly integrated into existing CIM deployments.

From a system-level perspective, row permutations and reversed dataflow require buffer drivers and multiplexing circuitry already present in state-of-the-art CIM implementations \cite{farias2025a,farias2025b}. The approach extends these architectures by modeling and reducing crossbar nonidealities with a novel mapping policy.

\section{Experiments}
\label{sec:exp}
We assess (1) how accurate the Manhattan Hypothesis is and (2) NF reduction and (3) model accuracy drop considering PR effects before/after MDM, benchmarking multiple DNNs. Crossbar computations were simulated in SPICE and PyTorch on ImageNet-1K \cite{deng2009} on all model layers (ResNets, VGGs, ViTs and DeITs from native PyTorch models), trained in 32-bit floating point. The simulations used 128x10 crossbars in 64x64 tiles with the same resistance values as in Section \ref{sec:manhattan}.
\subsection{The Manhattan Hypothesis Accuracy}
\label{sec:spice}
We evaluate the Manhattan Hypothesis in three stages: (1) we generate 500 randomized crossbar tiles with approximately 80\% sparsity, matching the lower bound observed across the evaluated models. Since the least sparse model, DeiT-Base, exhibits 76\% sparsity, this level ensures consistency with all architectures, whose sparsity is at least 80\%; (2) each tile is simulated in SPICE. The circuit-level simulation measures the NF by probing the column outputs for $r = 0$ (expected output) and $r = 2.5$ $\Omega$ (actual output affected by PR); (3) we apply least-squares to find the linear map between the measured and calculated NF\footnote{We calculate NF from Equation \eqref{eq:nf_column} and measure it using SPICE.} (see Figure \ref{fig:distribution}).

\begin{figure}
    \centering
\includegraphics[width=\columnwidth]{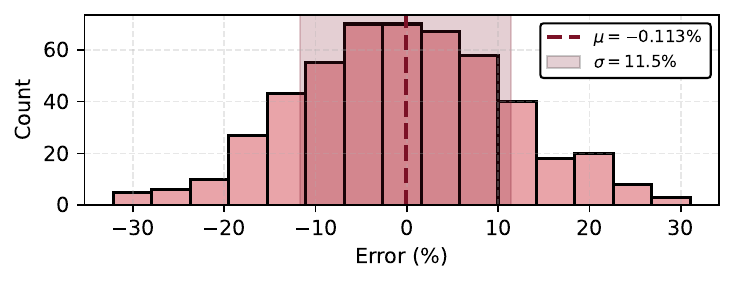}
    \caption{Error distribution of the Manhattan Hypothesis linear fit has mean $\mu = -0.126\%$ and standard deviation $\sigma = 11.2\%$.}
    \label{fig:distribution}
\end{figure}
\subsection{Nonideality Factor Reduction}
The Manhattan hypothesis allows fast PyTorch NF evaluation without exhaustive circuit-level simulation of every DNN tile. As illustrated in Figure \ref{fig:nf}, MDM significantly reduces the NF. By comparing dataflows, we observe that reverted dataflow improves MDM by up to 50\% compared to conventional.
\begin{figure}
    \centering
\includegraphics[width=\columnwidth]{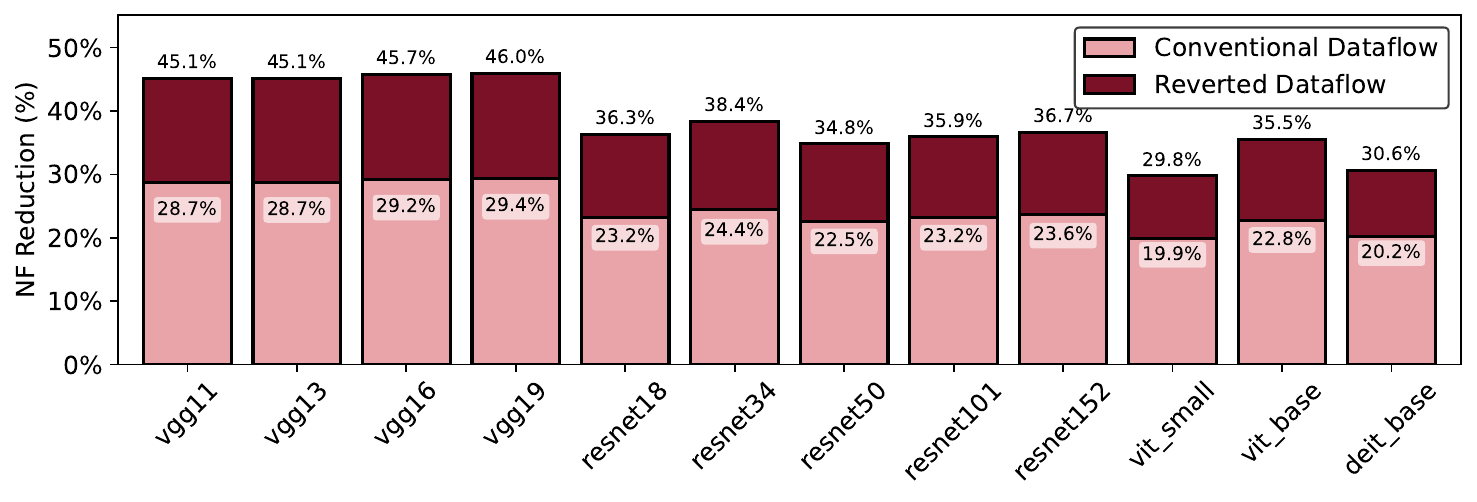}
    \caption{NF reduction with MDM for different dataflows.}
    \label{fig:nf}
\end{figure}
\subsection{Model Accuracy Evaluation}
Finally, we translate the NF reduction to model accuracy by injecting position-dependent noise in PyTorch, where each weight is modified proportionally to the Manhattan distance:
\begin{equation}
\label{eq:translate}
    w'_j=\sum_{k \leq K}b_{j,k}(w_j)2^{-k}[1 + \eta \delta_{j,k}],
\end{equation}
where $\eta$ is the noise coefficient.

The parameter $\eta$ is calibrated in SPICE using Equation \eqref{eq:translate}, such that simulations with $r=2.5$ $\Omega$ match the ideal $r=0$ case. This procedure yields $\eta = 2\times 10^{-3}$. Figure \ref{fig:acc} reports model accuracy under noise injection with and without MDM.
\begin{figure}
    \centering
\includegraphics[width=\columnwidth]{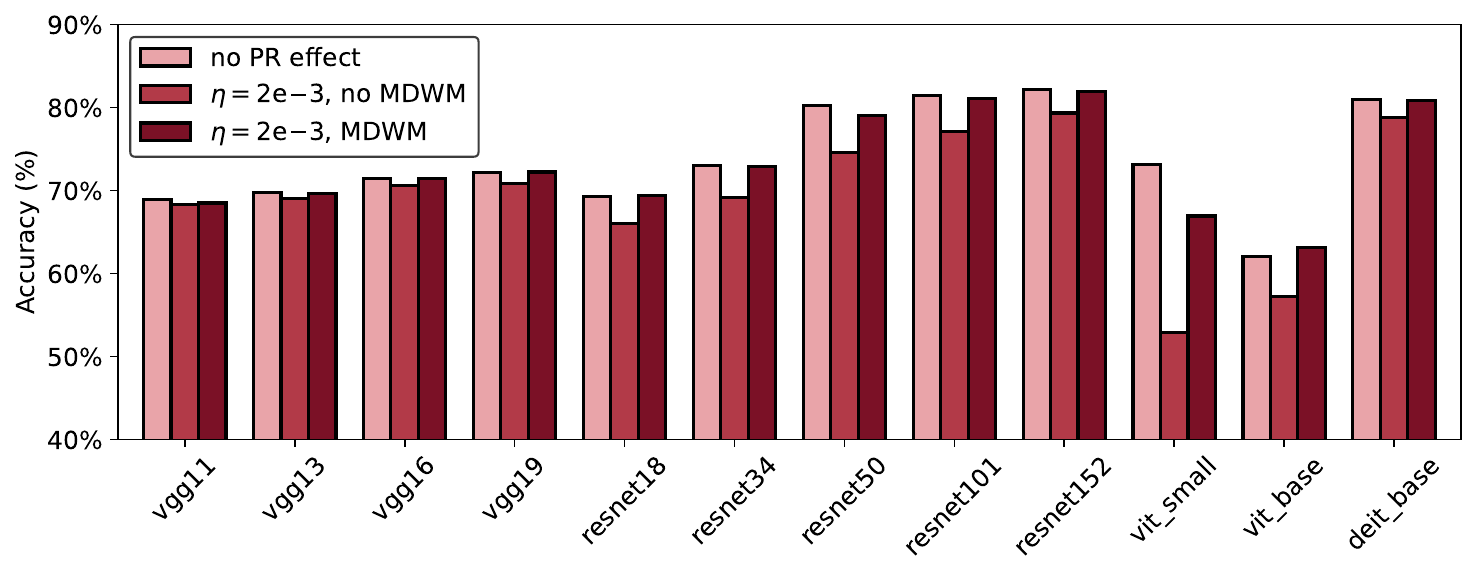}
    \caption{Model accuracy for different configurations.}
    \label{fig:acc}
\end{figure}
Overall, MDM tends to be less effective for transformer models due to their characteristically flatter weight distributions \cite{farias2025a, farias2025b, tambe2020, bonda2021}. As a result, their bit-line representations become denser in higher-order columns and sparser in lower-order ones, which diminishes the benefits of MDM.
\section{Conclusion}
We introduced the \textit{Manhattan Distance Mapping} (MDM), a spatially informed post-training weight mapping method that reduces parasitic resistance effects in memristive compute-in-memory (CIM) crossbars. By reversing the dataflow and reordering rows according to their cumulative Manhattan distance from the I/O rails, MDM effectively relocates active memristors toward regions less affected by PR voltage drops. The method considerably reduces the nonideality factor (NF).

Through circuit-level and PyTorch-based simulations on ImageNet-1k, we demonstrated that MDM decreases NF by up to 46\% and improves inference accuracy under analog distortion by an average of 3.6\% in ResNet architectures. These results enable larger crossbars to operate with reduced PR degradation.

By bridging algorithmic and device-level constraints, MDM opens new directions for understanding nonidealities in CIM.
\bibliographystyle{ieeetr}
\bibliography{refs}
\end{document}